\newcommand\blfootnote[1]{%
  \begingroup
  \renewcommand\thefootnote{}\footnote{#1}%
  \addtocounter{footnote}{-1}%
  \endgroup
}
\providecommand{\tabularnewline}{\\}
\providecommand{\algorithmname}{Algorithm}
\newsavebox\zzz
\def\mystrut{%
\dimen@\wd\zzz
\divide\dimen@\thr@@
\advance\dimen@-\dp\@arstrutbox
\rule\z@\dimen@}
\def\rotatezzz{%
\rotatebox{90}{\rlap{\kern-\dp\@arstrutbox\usebox\zzz}}}
\newtheorem{remark}{Remark}
\newtheorem{lemma}{Lemma}
\newcommand{\cref}[1]{Chapter~\ref{#1}}
\newcommand{\first}{\emph{(i)}~}
\newcommand{\second}{\emph{(ii)}~}
\newcommand{\third}{\emph{(iii)}~}
\newcommand{\fourth}{\emph{(iv)}~}
\newcommand{\ie}{i.e., \@}
\newcommand{\eg}{e.g., \@}
\newcommand{\etal}{et~al.}
\newcounter{fn1}
\newcounter{fn2}
\newcounter{fn3}
\newcounter{fn4}
\newcounter{fn5}
\begin{document}

\title{On the Interplay of Link-Flooding Attacks and Traffic Engineering}

\numberofauthors{3}
\author{
% 1st. author
Dimitrios Gkounis\\
       \affaddr{NEC Labs Europe, Germany}\\
       \email{dimitrios.gkounis@neclab.eu}
% 2nd. author
\alignauthor
Vasileios Kotronis\\
       \affaddr{ETH Zurich, Switzerland}\\
       \email{biece89@gmail.com}
% 3rd. author
\alignauthor
Christos Liaskos\\
       \affaddr{FORTH, Greece}\\
       \email{cliaskos@ics.forth.gr}
\and  % use '\and' if you need 'another row' of author names
% 4th. author
\alignauthor Xenofontas Dimitropoulos\\
       \affaddr{FORTH, Greece}\\
       \email{fontas@ics.forth.gr}
}

\maketitle
\begin{abstract}
Link-flooding attacks have the potential to disconnect even entire
countries from the Internet. Moreover, newly proposed \emph{indirect}
link-flooding attacks, such as ``Crossfire'', are extremely hard
to expose and, subsequently, mitigate effectively. Traffic Engineering
(TE) is the network's natural way of mitigating link overload events,
balancing the load and restoring connectivity. This work poses the
question: \emph{Do we need a new kind of TE to expose an attack as
well?} The key idea is that a carefully crafted, attack-aware TE could
force the attacker to follow improbable traffic patterns, revealing
his target and his identity over time. We show that both existing
and novel TE modules can efficiently expose the attack, and study
the benefits of each approach. We implement defense prototypes using
simulation mechanisms and evaluate them extensively on multiple real
topologies.
\end{abstract}
%\category{C.2.0}{Computer Communication Networks}{General-security and protection} \category{C.2.3}{Computer Communication Networks}{Network Operations-network management}

\keywords{DDoS defense; link-flooding attack; traffic engineering.}

\section{Introduction\label{sec:intro}}
\blfootnote{This work was funded by the European Research Council, grant EU338402, project \href{http://www.netvolution.eu}{NetVOLUTION}.}
Some of the most powerful DDoS (Distributed Denial of Service) attacks
ever have been observed during 2013 and 2014, reaching traffic rates
greater than $300$~$Gbps$~\cite{spamhaus}. Moreover, new types
of indirect DDoS link-flooding attacks have been recently proposed,
which are extremely difficult to mitigate~\cite{crossfire,Coremelt}.
In particular, the \textit{Crossfire} attack~\cite{crossfire} (illustrated
in Fig.~\ref{figRationale}) seeks to cut-off a given network area
(target) from the Internet, while sending \emph{no} attack traffic
directly to the target as follows: \first the attacker detects the
link-map around the target by executing traceroutes towards many points
within the network, \second locates critical links that connect the
intended target to the Internet, \third deduces non-targeted network
areas (decoys) that are also served via the critical links, and \fourth
consumes the bandwidth of the critical links with multiple low-bandwidth
flows (\eg normal HTTP messages) originating from attacker-controlled
bots and destined towards the decoys. Thus, the target loses Internet
connectivity, without noticing the attacker's traffic.

Traffic Engineering (TE) is the expected reaction of a network on
link-overload events. Regardless of their cause, TE is triggered to
perform re-routing and load-balance the network traffic. Thus, there
is a natural interplay between the Crossfire attack and TE. Following
a TE round, the attack will pinpoint and flood new links and the cycle
repeats, potentially using different decoys and bots~\cite{crossfire}.
Despite this core-placement of TE in the attack cycle, its effects
on the attack exposure potential remain unknown. Nonetheless, even
existing TE schemes could potentially expose the attack, without alterations.
For instance, an administrator could exploit the repeated TE, and
monitor network areas that are persistently affected by link-flooding
events. Such areas can be marked as probable targets, gradually implying
that an attack is in progress. Additionally, traffic sources that
persistently react to re-routing can be monitored for the purpose
of affecting a specific target. If certain sources are recorded several
times in links that are DoS'ed, effectively behaving as a bot swarm,
they can be marked as suspicious.

The present work studies the effect of TE on the attack exposure via
analysis and experimentation. The focus is on the destination-based
routing case, which remains widely adopted \cite{caesar2005bgp}.
Our contribution is two-fold, showing that: \first Crossfire attacks
can be exposed in a manner agnostic to the underlying, attack-unaware
TE scheme, and \second certain attack-aware TE schemes can contain
the effects of the attack within a small part of the network, whereas
attack-unaware TE may cause network-wide routing changes. Nonetheless,
attack containment can increase the attack exposure time, while optimizing
this trade-off is an interesting open problem. \vspace{-8bp}

\section{System Model\label{sec:lf-attacks}}

The study assumes a destination-based routed network and the Crossfire
attack model \cite{crossfire}. The network can have multiple gateways
to the Internet, while each network node represents an area that may
contain one or more physical machines. The network contains a node
that represents the \emph{target} area of a bot swarm. Bots are entities
with unique identifiers (e.g., machines with different IPs) that operate
from beyond or within the studied network. The attack model defines
a continuous cycle of interactions between the \emph{attacker} (bot
swarm) and the \emph{defender} (network administrator)~\cite{crossfire}.

\textbf{\uline{Attacker Model}}\textbf{:} The attacker seeks to
cut-off the paths connecting the gateways to the target. On these
paths there exist public servers, the \textit{Decoy Servers}, and
the respective \textit{Critical Links}, which lead to both the target
and the decoy servers. The attacker first constructs a map of the
network links (the \textit{link-map}) around the target, e.g., by
executing \emph{traceroutes} towards multiple points in the network~\cite{gkounis}.
Then, he floods critical links by sending traffic only to decoy servers.
Thus, all paths connecting the target to the gateway(s) are blocked
due to congestion at some links, while the target sees no apparent
cause, such as a high volume of direct traffic.

\textbf{\uline{Defender Model}}\textbf{:} The goal of the defender
is to keep the network running without any severe performance degradation
(\eg flooded links) and to expose probable targets and bots. Therefore,
he: \first monitors traffic load on his network and detects links
that are flooded,~\second balances the incoming traffic by re-routing
load destined to different destinations (including the target, decoys,
etc.), and~\third executes actions to expose probable attackers
and their target.

\textbf{\uline{Attacker - Defender Interaction}}\textbf{:} The
attacker monitors the network routes and reacts to routing changes
performed by the defender, as exemplary shown in Fig.~\ref{figRationale}.
Bots will then change their decoy server selection in case the re-routing
has diverted their load from the critical link(s), repeating the cycle.
Thus, the attacker updates the link-map, recalculates the critical
links and floods them again with several bot-originated flows. In
the example of Fig.~\ref{figRationale}, both the links and the decoys
vary per cycle. The attacker may also vary the bots that participate
in each attack cycle, in an effort to make their presence less persistent.
In this case, the defender can reply by simply distributing traffic
to the target across multiple routes, making future flooding attempts
harder.
\begin{figure}
\begin{centering}
\includegraphics[bb=0bp 0bp 580bp 250bp,clip,width=0.95\columnwidth]{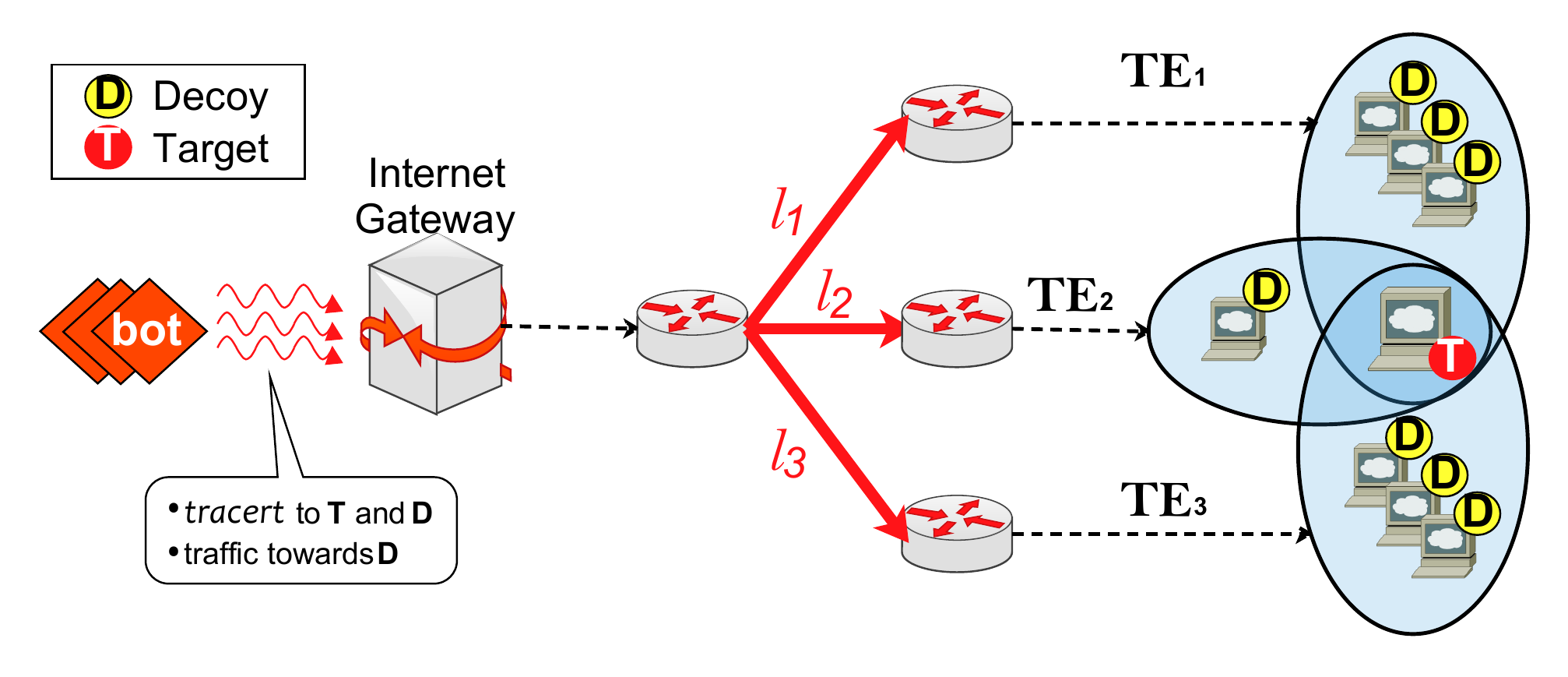}
\par\end{centering}

\caption{\label{figRationale}Attacker-Defender interaction. The attacker floods
a link~$l_{1}$. The defender then re-routes traffic~(TE$_{2}$).
The attacker updates the selected decoy servers, flooding link~$l_{2}$.
The defender replies with TE$_{3}$ and the attacker floods link $l_{3}$,
and so on.\vspace{-8bp}
}
\end{figure}

\section{Analysis}

\label{sec:defense}

The defense objectives are assumed to be the following: \first expose
probable targets and, hence, an attack, and \second expose traffic
sources which are consistently involved in it.

The exposure of probable attack targets is considered as the primary
goal, and it is based on the fact that areas affected by malevolent
link-floods contain the attack target. Thus, as shown in Fig. \ref{figRationale},
if the intersection of the affected areas is not empty, then there
must be an attack towards a target in this area intersection.

The exposure of bots relies on the assumption that malevolent traffic
sources (bots) tend to: \first change their destinations, or \second
open new connections, persistently affecting a probable target (Fig.
\ref{figRationale}). On the contrary, benign causes of link-floods
(\eg flash-crowds) do not re-adjust to routing changes in such a
persistent manner~\cite{xue2014towards}.

Nonetheless, studies have proven that bot detection approaches based
on the bot reuse assumption does not guarantee detection~\cite{crossfire,INFOCOM}.
For instance, if the botnet size is vast, the attacker needs not use
each bot frequently. In such cases, the defender can raise his bot
detection ratio by deploying multiple, generic heuristics in parallel,
such as reflector nets, white holes, bot-hunters, and rate limiters~\cite{shin2013fresco}.
The attacker should then avoid all deployed traps on a regular basis
and for every bot, which can be extremely challenging. Thus, the proposed
scheme also considers the smooth collaboration with other third-party
defense mechanisms in order to increase the chances of bot detection.

In light of these remarks, we operate as follows. Section \ref{sec:prereq}
presents the preliminaries. The defense workflow is modeled and analyzed
in Section~\ref{sec:def-workflow}, assuming independence from the
underlying TE module. In Section~\ref{sec:remote}, we study the
case of attack-aware TE modules. Collaboration with third-party defense
mechanisms is discussed in Section~\ref{sub:Collaboration}.

\subsection{Notation and Assumptions\label{sec:prereq}}

\textbf{\uline{Notation}}\textbf{.} The network is denoted as $W=\left\{ \mathcal{N},\mathcal{L}\right\} $
comprising a set of nodes $\mathcal{N}$ and links $\mathcal{L}$.
A single node is denoted as $n\in\mathcal{N}$ and a directed network
link as $l\in\mathcal{L}$. $s(l)$~and~$d(l)$ represent the source
and destination nodes of~$l$. The network has a set $G$ of gateway
nodes $g\in G$ and a persistently targeted node $target\in\mathcal{N}$.
The path connecting a node $n$ to a destination $m$ is expressed
as $\overrightarrow{p_{(n,m)}}=\left\{ l_{1},l_{2},l_{3},\ldots,l_{k}\right\} $,
where $\left\{ \ldots\right\} $ are the ordered links comprising
the path, with $s(l_{1})=n$, $d(l_{k})=m$. The \emph{free} bandwidth
of link $l$ is~$\mathcal{B}(l)$.

Link-flooding attacks deplete the bandwidth of certain network links,
using bots to send traffic over the gateways to the network. The defender
deduces if a link~$l$ is flooded, \ie $flooded(l)\to\nicefrac{true}{false}$,
based on existing approaches (\eg~\cite{xue2014towards}). We define
$\mathbf{Src}\left\{ l\right\} $ as the set of IP addresses, \ie
malevolent or legitimate data sources that contribute to the traffic
flowing within link $l$. Likewise, $\mathbf{Dst}\left\{ l\right\} $
denotes the nodes served \emph{via} link $l$ in the examined down-stream
direction, \ie $\mathbf{Dst}\left\{ l\right\} =\left\{ n:\,l\in\overrightarrow{p_{(g,n)}}\right\} $.

\textbf{\uline{Routing Changes}}\textbf{.} We next define $\mathcal{R}_{TE}$
as the set that contains all routing configurations that can be deployed
in~$W$. The instance of routing rules presently adopted within the
network is denoted as $r_{TE}\in\mathcal{R}_{TE}$. Migrating from
$r_{TE}$ to another configuration $r'_{TE}$ requires the addition/removal
of rules at the routing tables $T_{n},\,n\in\mathcal{N}$, of the
network nodes that route/switch traffic. In order to denote the rule
modifications needed for this migration, we define the associative
operator $\ominus$ on two routing instances, $r_{TE}$ and $r'_{TE}$,
which counts the required routing table changes as~\cite{Thaler.2000e}:
\begin{equation}
r_{TE}\ominus r_{TE}^{*}=\sum_{\forall n\in\mathcal{N}}\left\Vert T_{n}^{r_{TE}}\cup T_{n}^{r'_{TE}}-T_{n}^{r_{TE}}\cap T_{n}^{r'_{TE}}\right\Vert
\end{equation}
where $\cap,\,\cup,\,-$ are the set intersection, union and difference
operators, and $\left\Vert *\right\Vert $ is the cardinality of a
set $*$. The migration from $r_{TE}$ to $r'_{TE}$ is done with
existing approaches~\cite{jin2014dynamic}.

\textbf{\uline{Attack}}\textbf{.} The attacker first discovers
the paths $\overrightarrow{p_{(g,target)}}$, $g\in G$, for the current
routing tree, $r_{TE}$. A practical procedure based on distributed
traceroutes is detailed in~\cite{gkounis}. Then, for each path,
he selects and attacks the link with the smallest amount of free bandwidth,
seeking to limit the number of required bots~\cite{crossfire}. In
the \emph{traceroutes} approach, link loads can be inferred via the
corresponding delays. Increased link delay indicates high queuing
delay and, therefore, high load as well. In addition, the selected
link should be on the path to some minimum number of decoy nodes,
$D_{min}$~\cite{crossfire}. More decoys mean that the attacker
can mask his traffic more effectively as legitimate, since each destination
will receive less bot traffic. Thus, the critical links are selected
as:

$l=argmin_{\left(l\right)}\left\{ \mathcal{B}(l):\,l\in\overrightarrow{p_{(g,target)}},\,\left\Vert \mathbf{Dst}\left\{ l\right\} \right\Vert \ge D_{min}\right\} $\\
Finally, he selects the bots with the lowest participation in past
attack steps and launches the new attack.

\subsection{Defense Analysis and Workflow\label{sec:def-workflow}}

Let $t=0,1,\ldots$ denote the time moments when the network administrator
notices link-flooding events. Let the flooded links at time $t$ be
$l_{i}^{t},\,i=0,1,\ldots$. The set of nodes whose connectivity is
affected by these flooded links is:
\begin{equation}
\mathbf{Dst}^{t}=\sideset{}{_{\forall i}}\bigcup\mathbf{Dst}\left\{ l_{i}^{t}\right\}
\end{equation}
The TE module is naturally called to load-balance the traffic, relieving
the congested links and restoring connectivity. In terms of attack
target exposure, it would be desirable to apply a new routing tree,
$r_{TE}^{t+1}$, that decouples each node $n\in\mathbf{Dst}^{t}$
from the rest of the set $\mathbf{Dst}^{t}$. In Fig.~\ref{figRationale},
for example, the attack at cycle 1 affects 3 decoys and the intended
target. Thus, the defender sees 4 probable targets. If the subsequent
TE assigned dedicated, link-disjoint paths to each of the four nodes,
the attacker would not affect the 3 original decoys again. Thus, the
real target would stand out. In general, the $r_{TE}^{t+1}$ that
connects each node $n\in\mathbf{Dst}^{t}$ to the network gateways
via link-disjoint paths is:
\begin{equation}
r_{TE}^{t+1}:\,\sideset{}{_{\forall n\in\mathbf{N}^{t}}}\bigcap\overrightarrow{p_{(G,n)}^{t+1}}=\emptyset\label{EQOPTIM}
\end{equation}
Based on condition (\ref{EQOPTIM}), we observe:

\begin{remark}\label{OptimalTE}Topological path-diversity favors
the exposure of Crossfire attack targets.\end{remark}

Condition (\ref{EQOPTIM}) and Remark \ref{OptimalTE} are generally
aligned to the load-balancing objective of TE. Relieving flooded links
means that certain nodes need to be served via decoupled paths, while
path-diversity is known to favor the efficiency of TE~\cite{Hopps.2000b,Buriol.2005}.
Nonetheless, in terms of attack exposure, $r_{TE}^{t+1}$ should also
decouple $\mathbf{Dst}^{t}$ from all past sets $\mathbf{Dst}^{\tau},\,\tau=0\ldots t-1$:
\begin{equation}
r_{TE}^{t+1}:\,\left\Vert \sideset{}{_{\tau=0}^{t+1}}\bigcap\mathbf{Dst}^{\tau}\right\Vert =1\label{EQOPTIMALt}
\end{equation}
Condition (\ref{EQOPTIMALt}) also incorporates the concern that no
past $r_{TE}$ should be repeated at $t+1$, in order to avoid loops,
which benefit the attacker. Thus, the paths available to the TE process
will reduce with $t$, eventually hindering its load-balancing potential.

Due to this limitation, we initially study the case where TE and defense
are decoupled. This constitutes the common, real-world case, where
TE is executed just for load-balancing, without Crossfire-derived
path restrictions or memory~\cite{Hopps.2000b,Buriol.2005}. Then,
assuming sufficient path-diversity for efficient load balancing, we
study the probability of a network node being coupled to the attack
target \emph{by chance}.
\begin{algorithm}[t]
{\small{}~1:}\textbf{\small{}~On init do }{\small{}//Applies at
initialization as well.}{\small \par}

{\small{}~2:~\ \ \ $suspect_{IPs}\gets\emptyset;$ \ \ }\emph{\small{}//$Hash[IP]\to penalty$.}{\small \par}

{\small{}~3:\ \ \ $suspect_{targets}\gets\emptyset;$ \ \ }\emph{\small{}//$Hash[node]\to penalty$.}{\small \par}

{\small{}~4:\ \ \ $r_{TE}\gets$any from $\mathcal{R}_{TE};$
\ \ \ }\emph{\small{}//}{\small{}Initial TE routing rules}\emph{\small{}.}{\small \par}

{\small{}~5:~}\textbf{\small{}On event}{\small{} $flooded(l)$ }\textbf{\small{}do}{\small \par}

{\small{}~6:\ \ \ }{\small{}\uline{async\_call}}{\small{}:
$Penalize(\mathbf{Src}\left\{ l\right\} ,\,\mathbf{Dst}\left\{ l\right\} )$\ \ \ }\emph{\small{}//}{\small{}Non-blocking
call.}{\small \par}

{\small{}~7:\ \ \ $r_{TE}\gets r'_{TE}\in\mathcal{R}_{TE};$ \ \ \ }\emph{\small{}//}{\small{}Execute
new TE routing}\emph{\small{}.}{\small \par}

{\small{}~8:~}\textbf{\small{}Procedure}{\small{} $Penalize(\mathbf{Src},\,\mathbf{Dst})$
//Reinforcement learning.}{\small \par}

{\small{}~9:\ \ \ $suspect_{IPs}\gets penalize\_IPs(suspect_{IPs},\,\mathbf{Src});$}{\small \par}

{\small{}10:\ \ \ $suspect_{targets}\gets penalize\_nodes\left(suspect_{targets},\,\mathbf{Dst}\right);$}{\small \par}

{\small{}11:\ \ \ }\textbf{\small{}if}{\small{} $penalization\_conclusive\left(suspect_{IPs},\,suspect_{targets}\right)$}{\small \par}

{\small{}12:\ \ \ \ \ Deduce the attack target from $suspect_{targets}$;}{\small \par}

{\small{}13:\ \ \ \ \ Deduce bots using $suspect_{IPs}$;}{\small \par}

{\small{}14:\ \ \ \ \ $suspect_{IPs}\gets\emptyset;$}{\small \par}

{\small{}15:\ \ \ \ \ $suspect_{targets}\gets\emptyset;$}{\small \par}

{\small{}16:\ \ \ }\textbf{\small{}end if}{\small \par}

{\small{}\caption{{\small{}\label{ALG}Defense Workflow against Crossfire.}}
}{\small \par}
\end{algorithm}

\begin{lemma}\label{Lemma}The probability of a non-targeted node
being coupled to the target at any attack step is bounded in $\left[0,\nicefrac{1}{2}\right]$.\end{lemma}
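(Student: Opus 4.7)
The plan is to establish the bound by a pairing argument on routing configurations, exploiting the path-diversity assumption stated just before the lemma. First I would formalise the coupling event for a fixed non-target node $n$ at step $t$: $n$ is \emph{coupled} to the target when $n \in \mathbf{Dst}\{l\}$ for some critical link $l$ lying on a path $\overrightarrow{p_{(g,target)}^{t}}$ under the current routing $r_{TE}^{t}$. Equivalently, $n$'s downstream path from some gateway shares one of the flooded links with the target's path. The probability is then over the randomness in the TE module's choice of $r_{TE}^{t+1}\in\mathcal{R}_{TE}$, which is assumed attack-oblivious and therefore effectively uniform over routings that are feasible for load balancing.

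For the upper bound, I would partition $\mathcal{R}_{TE}$ into the set of \emph{coupling} routings $\mathcal{R}^{c}(n)$ and its complement $\mathcal{R}^{d}(n)$, and construct an injection $\mathcal{R}^{c}(n)\hookrightarrow\mathcal{R}^{d}(n)$. Given any $r\in\mathcal{R}^{c}(n)$ in which $n$ shares a critical link $l$ with the target, the path-diversity assumption guarantees the existence of an alternative path for $n$ that is link-disjoint from $\overrightarrow{p_{(g,target)}}$ in the neighbourhood of $l$; swapping $n$'s routing entries to this alternative produces a distinct $r'\in\mathcal{R}^{d}(n)$. Since the swap is local to $n$'s table entries, the map $r\mapsto r'$ is well-defined and injective, giving $\lVert\mathcal{R}^{c}(n)\rVert\le\lVert\mathcal{R}^{d}(n)\rVert$ and hence $\Pr[n\text{ coupled}]\le\tfrac{1}{2}$. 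The lower bound of $0$ is trivial: if $n$'s candidate paths are permanently link-disjoint from all target-gateway paths, $n$ is never coupled, so every value in $[0,\tfrac{1}{2}]$ is realisable.

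The main obstacle I anticipate is making the pairing argument rigorous in the presence of several critical links simultaneously (one per gateway path) and the attacker's deterministic minimum-bandwidth selection rule. A local swap of $n$'s path may decouple $n$ from one critical link while still leaving it coupled through a second gateway, so the injection must act on $n$'s entire routing footprint rather than on a single table entry; likewise, one must check that the swapped routing still qualifies as an admissible element of $\mathcal{R}_{TE}$ (i.e., no loops and comparable load-balancing cost). I would address this by working per-gateway and applying the pairing independently for each critical link, then combining via a union bound, and by phrasing the path-diversity hypothesis strongly enough that each swap remains in the admissible set — reducing the general case to the single-gateway single-critical-link scenario where the pairing is symmetric and the $\tfrac{1}{2}$ bound is tight.
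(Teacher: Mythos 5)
Your proof takes a genuinely different route from the paper's, but as written it does not close. The paper does not reason about routing configurations at all: it parametrizes the attack by a severity level $k$ (the flooded link serves at most $k$ nodes), assumes every node subset of size up to $k$ containing the target is a realizable and equally likely affected set $\mathbf{Dst}\left\{ l\right\}$, and computes the coupling probability directly as a ratio of binomial sums, $p_{k}=\sum_{m=1}^{k}\binom{\left\Vert \mathcal{N}\right\Vert -2}{m-2}\big/\sum_{m=1}^{k}\binom{\left\Vert \mathcal{N}\right\Vert -1}{m-1}$, which rises strictly from $p_{1}=0$ (attacks at the leaves) to $p_{\left\Vert \mathcal{N}\right\Vert }\approx\tfrac{1}{2}$ (attacks near the gateways). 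That monotone interpolation is exactly what the discussion after the lemma relies on; your argument, even if repaired, would produce a flat $\tfrac{1}{2}$ for essentially all configurations and $0$ only in a degenerate one, losing the dependence on where the attack lands.

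The concrete gaps in the pairing argument are three. First, the map $r\mapsto r'$ is asserted, not constructed: you need a canonical choice of the alternative path and a way to recover $r$ from $r'$, otherwise two coupling routings that differ only in $n$'s entries (coupled through different critical links, say) can collapse to the same decoupled image, and injectivity fails. Second, the existence of an admissible, load-feasible, link-disjoint alternative for $n$ is precisely the strong path-diversity hypothesis the lemma is not entitled to presuppose; near the network leaves there may be no such alternative at all, and that is exactly the regime where the true probability is $0$, so the hypothesis is doing all the work where it is least available. Third, your proposed fix for multiple gateways --- pairing per critical link and then combining via a union bound --- yields a bound of $\left\Vert G\right\Vert /2$ rather than $\tfrac{1}{2}$, so the combination step destroys the claimed bound; the ``reduction to the single-gateway, single-critical-link case'' is stated but never justified.
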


\begin{proof}See Appendix.\end{proof}

Lemma \ref{Lemma} states that, despite the lack of interaction between
the TE and defense modules, the target will still stand-out from the
remaining nodes, with at least two times more appearances in attack-affected
areas than the next suspect, \emph{in the worst case}. The probability
reaches zero when the attack takes place near the network leaves (see
Appendix). The upper bound corresponds to attacks closer to the gateways.
In both cases, however, the \emph{cumulative} appearances of nodes
within $\mathbf{Dst}^{t}$ favor the exposure of the target. Similarly,
cumulative appearances of IPs in flooded links favor in principle
the exposure of bots.

Based on these remarks, we define a TE-agnostic defense workflow,
formulated as Algorithm~\ref{ALG}. The workflow is event-based,
defining an initialization event and a TE event. The initialization
takes place once, at the defense module setup phase. Any routing instance
$r_{TE}\in\mathcal{R}_{TE}$ can be active in the network at this
stage. The TE event is triggered when $flooded(l)$ yields true for
any network link(s), which is the common case. The TE event then calls
a reinforcement learning module to update the probable bots and targets,
and executes the underlying TE scheme. The reinforcement learning
module is called in a non-blocking fashion, therefore causing no delay
or other overhead to the TE process.

A generic template for the reinforcement learning module is outlined
in lines $8-16$. It requires two standard hash tables (\ie $O(1)$
average complexity for insert, search and delete operations) for storing
the needed state. The two hash tables assign ``penalties'' to traffic
sources and to network nodes. Penalizing an IP ($penalize\_IPs$)
has the generic meaning of gradually collecting evidence of its malevolence,
taking into account its persistence (i.e., after re-routing). On the
other hand, penalizing a node ($penalize\_node$) corresponds to the
generic process of gradually gathering evidence of an attack being
launched against it (\eg flooding its connecting links). If a custom
criterion ($penalization\_conclusive$) yields that the penalization
process is conclusive (\eg penalties surpass a threshold set by the
network administrator), the network administrator can enforce an attack
mitigation policy of his choice. For instance, blacklisting can be
applied to the deduced bots, while extra paths can be deployed between
the network gateways and the attack target. Examples of instantiating
the mentioned abstract functions and mitigation policies are given
in Section \ref{seceval}.

We should note that the modeled workflow does not require any additional
input parameters, apart from the ones already gathered by most network
operators. Particularly, network routes, and subsequently $\mathbf{Dst}\left\{ l\right\} $,
are the \emph{needed} output of any TE scheme, which is critical to
the operation of any network. In addition, the monitoring of flows
over each route, and subsequently $\mathbf{Src}\left\{ l\right\} $,
is \emph{required} by standard defense mechanisms for commonplace
attacks~\cite{hoque2014network}. A trivial example is the mitigation
of link-flooding due to direct attacks with elephant flows. Moreover,
the asynchronous call mode of the learning module ensures that the
defense module runs in parallel to the TE objectives, without altering
or obstructing its operation. Finally, the complexity of the discussed
defense workflow is determined by the chosen reinforcement learning
module.

\subsection{Defense with Attack-aware TE\label{sec:remote}}

While an attack-unaware TE process allows for the exposure of the
attack, potentially more can be achieved by incorporating an attack-aware
TE module to the modeled workflow. Particularly, a classic, load-balancing
TE will remain oblivious of the attack and overlook the continuous
routing changes required to migrate from $r_{TE}^{t}$ to $r_{TE}^{t+1}$.
In cases of repeated and frequent migrations, such as in DDoS attacks,
these changes can be expensive in terms of routing table space and
disruptions caused to applications~\cite{godfrey2015stabilizing}.
Most importantly, the attacker is inherently allowed to affect the
routing of potentially the whole network. Therefore, the posed question
is whether an attack-aware TE module can limit the routing changes
and contain the disruptions, while still exposing the attack.

First, we employ the routing disruption metric $\mathcal{M}(t)$~\cite{Thaler.2000e}:
\begin{equation}
\mathcal{M}(t)=\sideset{}{_{\tau=1}^{t}}\sum r_{TE}^{\tau+1}\,\ominus\,r_{TE}^{\tau}\label{EQDISRUPT}
\end{equation}
Then, the novel ReMOTE (Routing MOdification minimizing TE) limits
$\mathcal{M}$ by \first reusing big parts of $r_{TE}^{t}$, effectively
retaining some memory of past routing paths, and \second affecting
potential target areas only. ReMOTE is essentially a binary search
for the target over several attack cycles, while ensuring that all
flooded links are relieved, i.e., their loads remain below a given
threshold. ReMOTE comprises two steps, executed at each cycle:\\
\quad{}$\text{\ding{182}}$. Bisect the routing tree defined by node
set $\mathbf{Dst}^{t}$ and $r_{TE}^{t}$, e.g. by using the process
of Reed et al.~\cite{reed1993finding} which has $O(\left\Vert \mathbf{Dst}^{t}\right\Vert )$
complexity. Thus, $\mathbf{Dst}^{t+1}$ will comprise half of the
$\mathbf{Dst}^{t}$ set.\\
\quad{}$\text{\ding{183}}$. Rehome each of the ensuing routing trees
to any gateway $g\in G$, using link-disjoint paths that can support
the added traffic load. This is achieved with a path finding algorithm
(e.g., Dijkstra or A$^{*}$) for $\overrightarrow{p_{g,d(l)}}$, sequentially
for each $flooded$ link $l$ and a gateway $g\in G$.

An example of ReMOTE is given in Fig. \ref{FIGEXAMPLES}, where the
attacker aims to disconnect node $7$. He thus sends load to other
decoy nodes and tries to cut the link $l_{g,5}$, denoted in bold.
The nodes in Fig.~\ref{FIGEXAMPLES} are annotated with their traffic
demand ratio at this point. Once the link flooding is detected, ReMOTE
applies step 1, bisecting the original routing tree comprising nodes
1-9. The tree bisection process seeks to define two routing subtrees
that are approximately equal both in terms of total traffic demand
and number of nodes. In the context of Fig.~\ref{FIGEXAMPLES}, the
bisection process returns the subtrees comprising nodes 1-5 and 6-9,
each yielding 4 nodes and 0.5 total traffic demand.

Subsequently, step 2 of ReMOTE is executed, and the sub-tree with
nodes 6-9 is routed to the gateway via a new path $g\to6$ (dashed
line, Fig.~\ref{FIGEXAMPLES}) which is link-disjoint to the previous,
$g\to5\to6$. The new path is selected so as to handle the added traffic
demand of nodes 6-9 (i.e., 0.5) without causing new flooding events.
At the next attack cycle, ReMOTE will focus on nodes 7-8, etc., eventually
pinpointing the target.

Finally, we collectively handle all exceptional cases, occurring,
e.g., when the maximum link utilization remains high after TE, or
when step 1 of ReMOTE returns empty sets. In any exceptional case,
ReMOTE hands over operation to a pure load-balancing TE~\cite{Buriol.2005},
for the current cycle only.
\begin{figure}[t]
\centering{}\includegraphics[bb=0bp 0bp 365bp 160bp,clip,width=0.95\columnwidth]{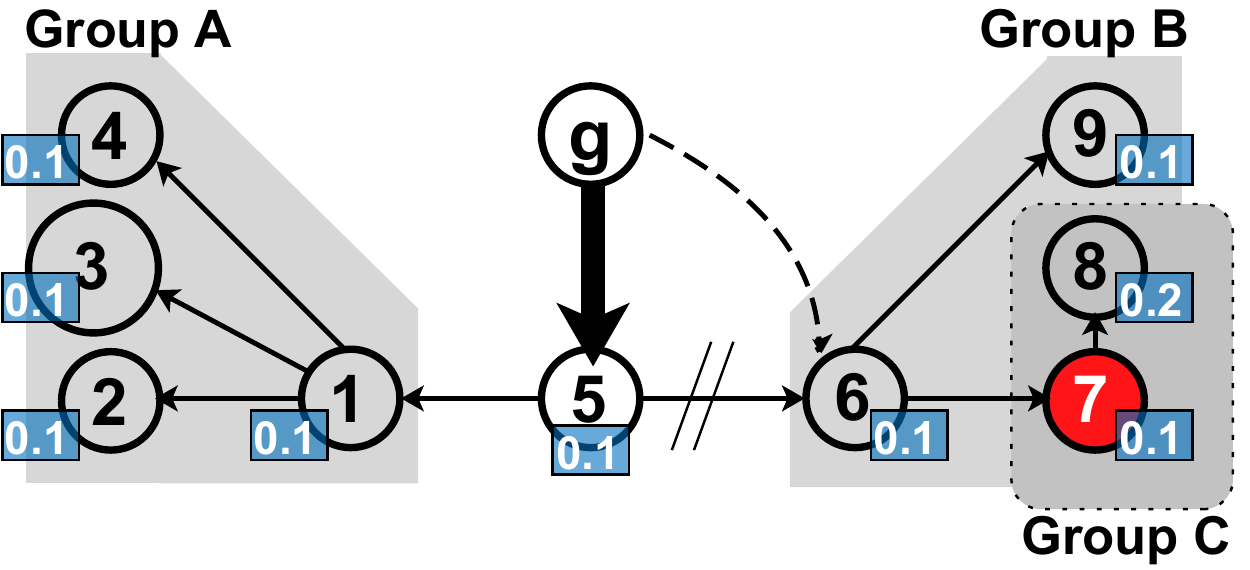}\vspace{-10bp}
 \caption{\label{FIGEXAMPLES} An example of ReMOTE in practice. The link $g,5$
is flooded by an attacker, and ReMOTE replies by isolating nodes 1-5
from 6-9 while also balancing their annotated load. Subsequent attacks
lead to the isolation of nodes (7, 8) from (6, 9), closing in on the
target (7). \vspace{-15bp}
}
\end{figure}

In the described manner, the attack is in principle contained within
a progressively-shrinking node set. Likewise, the defender's actions
will disrupt smaller parts of the network. As a binary search approach,
ReMOTE is completed at time $\tau\approx log_{2}\left\Vert \mathbf{Dst}^{1}\right\Vert $.
However, due to step (1), certain nodes will be persistently coupled
with the target. In particular, at time $\tau$, one node (i.e., $2^{1}-1$)
will have been coupled to the target $\tau-1$ times, $2^{2}-1$ nodes
$\tau-2$ times, and so on. Thus, in order to finally differentiate
between the target and suspect nodes, additional iterations are needed
as follows. The defender alters the specific routing of the most probable
target(s) (as of time $\tau$), deploying node-disjoint paths per
iteration. Thus, persistent couplings are broken and the target becomes
more discernible.

Notice that, counter-intuitively, an attack-unaware TE will generally
yield a more \emph{confident} target detection. According to Lemma~\ref{Lemma},
at an attack cycle $\tau$, the attack-unaware TE will have counted
$\tau$ participations of the real target in the affected node sets,
whereas any other node will have been observed just $\frac{\tau}{2}$
times. ReMOTE, on the other hand, will have counted $\tau-1$ participations
(out of $\tau$) for one non-targeted node, yielding a smaller detection
confidence. Thus, a possible trade-off between the two types of TE
is revealed. Particularly, an attack-unaware TE may discern the attack
target better than ReMOTE for the same number of iterations (i.e.,
faster detection). On the other hand, an attack-unaware TE can alter
the routing of the complete network, as opposed to the logarithmically
shrinking area-of-effect of ReMOTE. This trade-off is evaluated in
Section~\ref{seceval}.

\subsection{Collaboration with other defense schemes\label{sub:Collaboration}}

The effective mitigation of Crossfire requires collaboration among
existing defenses, especially when countering large botnets, as mentioned
in Section~\ref{sec:defense}. Within an Autonomous System (AS),
for example, existing solutions can force the attacker to use more
bots per attack iteration, facilitating their exposure~\cite{crossfire}.
Particularly, defenses based on flow rate monitoring can force the
attacker to use less traffic per bot and, hence, more bots to achieve
the same impact~\cite{braga2010lightweight}. Packet inspectors and
other heuristics can also peel-off the botnet by detecting and blocking
malevolent traffic sources independently. Thus, the bots at the attacker's
disposal decrease, forcing him to increase the reuse rate of the remaining
ones~\cite{xue2014towards}. For instance, IP traceback and TTL inspectors
can detect the origins of spoofed packets~\cite{ciscoSpoof}. Bot-Hunters
can detect bots based on the similarity of their traffic patterns~\cite{shin2013fresco}.
Phantom Nets can mislead an attacker into producing a false topology
of the network, while White-Holes can disguise real network nodes
as honeypots, tricking the attacker into an inefficient use of his
bots~\cite{shin2013fresco}. Such solutions can run independently,
forming a defense stack, while not being obstructed by (or obstruct)
the TE~module.

TE needs to trigger and work closely with inter-AS defenses, in the
case where a Crossfire attack has e.g., flooded all links around the
gateways. Thus, network-internal TE will be unable to load-balance
the traffic, calling for external help. TE is then applied in synergy
with the surrounding ASes~\cite{codef}. The AS-internal defense
stack can still run in parallel, removing as many bots as possible
in the process.

\section{Experimental Evaluation}

\label{seceval}

The defense workflow is implemented in the AnyLogic simulation platform~\cite{XJTechnologies.2015},
which offers visual, multi-paradigm programming and debugging, while
automating run repetitions to achieve a user-supplied confidence level
(set to $95\%$ for the present simulations). We incorporate an attack-unaware
TE~\cite{Buriol.2005} to Algorithm~\ref{ALG}, to validate the
analytical findings (Section~\ref{sec:def-workflow}). This TE is
representative of the load-balancing class of algorithms. It uses
path optimization driven by a Genetic Algorithm to achieve a classic
min-max link load utilization. We denote this approach as GATE. We
then incorporate ReMOTE to Algorithm~\ref{ALG}, in order to evaluate
the prospects of attack-aware TE (Section~\ref{sec:remote}). The
attacker model is as detailed in Section~\ref{sec:prereq}.

\textbf{\uline{Setup}}\textbf{.} The simulations assume $e=10,000$
IPs in total, which reside beyond the gateways. Out of these, $B_{size}$
(\%) are malevolent. The botnet of size $(B_{size}\cdot e)$ uses
$B_{part}$ (\%) randomly selected bots at each attack cycle, in order
to hinder detection. At the same time, a $P_{rehome}$ (\%) fraction
of the benign $\left(1-B_{size}\right)\cdot e$ hosts randomly picks
a new destination within the network, accentuating the appearance
of benign link-flooding traffic as well. We use several real topologies
from the TopologyZoo~\cite{topologyzoo} (full list in Fig.~\ref{fig:topos}),
and set the nodes in each topology with the northern and southern-most
geo-coordinates as gateway and target, respectively. The bandwidth
of each link is set to $1\,Gbps$. Each host, benign or bot, generates
up to $100\,Kbps$, indistinguishably. A link is \emph{flooded} if
its load exceeded $90\%$ at the past timestep.

\textbf{\uline{Penalization}}\textbf{.} We use the algorithm of
Misra et al. for the reinforcement learning, which runs on linear
complexity over the $\mathbf{Src}$, $\mathbf{Dst}$ data structures
and follows an event-counting logic~\cite{Misra.1982}. A node (potential
target) receives a $+1$ penalty each time its connectivity to the
gateway is affected by a link-flooding event. Likewise, an IP (potential
bot) is penalized by either $+1$ if it is found sending traffic over
a flooded link, or by $+2$ if it has \emph{also} changed its previous
destination to contribute \emph{again} to the link-flooding event.
If these criteria are not met, all non-zero penalties of nodes/IPs
are reduced by one. The penalization is deemed \emph{conclusive} at
timestep $\tau$, when the set of the most penalized nodes has not
changed for the last $5$ iterations, for stable detection. Then,
a host is classified as part of the botnet if it has accumulated more
than $\tau+1$ penalty points. The bot detection is more efficient
when $B_{part}$ is near $100\%$, since there will be at least one
penalization per bot per iteration.

\begin{table}[t]
\label{tbl:simulations}

\begin{centering}
\begin{tabular}{l|c|c|c||l|l|l|l|l|l|l}
\cline{2-10}
\multirow{2}{*}{\begin{turn}{90}
{\scriptsize{}Scenario~~~~~}
\end{turn}} & \multicolumn{3}{c|||}{{\scriptsize{}Parameters ($_{\%}$)}} & \multirow{2}{*}{\begin{turn}{90}
\parbox[c][1\totalheight][s]{45bp}{%
{\scriptsize{}Link util. \%}%
}
\end{turn}} & \multirow{2}{*}{\begin{turn}{90}
{\scriptsize{}Timesteps~~~~}
\end{turn}} & \multirow{2}{*}{\begin{turn}{90}
\parbox[c][1\totalheight][s]{45bp}{%
{\scriptsize{}Mod./node~~~}%
}
\end{turn}} & \multirow{2}{*}{\begin{turn}{90}
{\scriptsize{}Detect rate \%}
\end{turn}} & \multirow{2}{*}{\begin{turn}{90}
{\scriptsize{}False pos. \%}
\end{turn}} & \multirow{2}{*}{\begin{turn}{90}
\parbox[c][1\totalheight][s]{48bp}{%
{\scriptsize{}Tgt. detect \%}%
}
\end{turn}} & \multirow{2}{*}{\begin{turn}{90}
{\scriptsize{}TE scheme$^{*}$}
\end{turn}}\tabularnewline
\cline{2-4}
 & \begin{turn}{90}
{\scriptsize{}B~size}
\end{turn} & \begin{turn}{90}
 {\scriptsize{}B~part}
\end{turn} & \begin{turn}{90}
{\scriptsize{}P~rehome~}
\end{turn} &  &  &  &  &  &  & \tabularnewline
\cline{2-10}
\multirow{2}{*}{\textbf{\scriptsize{}1}} & \multirow{2}{*}{\textbf{\scriptsize{}10}} & \multirow{2}{*}{\textbf{\scriptsize{}100}} & \multirow{2}{*}{\textbf{\scriptsize{}0}{\scriptsize{} }} & \textbf{\scriptsize{}75}{\scriptsize{} } & \textbf{\scriptsize{}14}{\scriptsize{} } & \textbf{\scriptsize{}13}{\scriptsize{} } & \textbf{\scriptsize{}95}{\scriptsize{} } & \textbf{\scriptsize{}0}{\scriptsize{} } & \textbf{\scriptsize{}95}{\scriptsize{} } & {\tiny{}R}\tabularnewline
\cline{5-10}
 &  &  &  & \textbf{\scriptsize{}65}{\scriptsize{} } & \textbf{\scriptsize{}12}{\scriptsize{} } & \textbf{\scriptsize{}58}{\scriptsize{} } & \textbf{\scriptsize{}99}{\scriptsize{} } & \textbf{\scriptsize{}0}{\scriptsize{} } & \textbf{\scriptsize{}99}{\scriptsize{} } & {\tiny{}G}\tabularnewline
\cline{2-10}
\multirow{2}{*}{{\scriptsize{}$_{a}$}} & \multirow{2}{*}{{\scriptsize{}10}} & \multirow{2}{*}{{\scriptsize{}100}} & \multirow{2}{*}{{\scriptsize{}100} } & {\scriptsize{}75}  & {\scriptsize{}14}  & {\scriptsize{}13}  & {\scriptsize{}95}  & {\scriptsize{}0}  & {\scriptsize{}95}  & {\tiny{}R}\tabularnewline
\cline{5-10}
 &  &  &  & {\scriptsize{}65}  & {\scriptsize{}12}  & {\scriptsize{}55}  & {\scriptsize{}95}  & {\scriptsize{}0}  & {\scriptsize{}99}  & {\tiny{}G}\tabularnewline
\cline{2-10}
\multirow{2}{*}{{\scriptsize{}$_{b}$}} & \multirow{2}{*}{{\scriptsize{}10} } & \multirow{2}{*}{{\scriptsize{}50} } & \multirow{2}{*}{{\scriptsize{}100}} & {\scriptsize{}75}  & {\scriptsize{}14}  & {\scriptsize{}13}  & {\scriptsize{}25}  & {\scriptsize{}0}  & {\scriptsize{}95}  & {\tiny{}R}\tabularnewline
\cline{5-10}
 &  &  &  & {\scriptsize{}65}  & {\scriptsize{}12}  & {\scriptsize{}59}  & {\scriptsize{}29}  & {\scriptsize{}0}  & {\scriptsize{}99}  & {\tiny{}G}\tabularnewline
\cline{2-10}
\multirow{2}{*}{{\scriptsize{}$_{c}$} } & \multirow{2}{*}{{\scriptsize{}10} } & \multirow{2}{*}{{\scriptsize{}10} } & \multirow{2}{*}{{\scriptsize{}100}} & {\scriptsize{}76}  & {\scriptsize{}14}  & {\scriptsize{}13}  & {\scriptsize{}11}  & {\scriptsize{}0}  & {\scriptsize{}95}  & {\tiny{}R}\tabularnewline
\cline{5-10}
 &  &  &  & {\scriptsize{}67}  & {\scriptsize{}12}  & {\scriptsize{}57}  & {\scriptsize{}10}  & {\scriptsize{}0}  & {\scriptsize{}99}  & {\tiny{}G}\tabularnewline
\cline{2-10}
\multirow{2}{*}{\textbf{\scriptsize{}2}} & \multirow{2}{*}{\textbf{\scriptsize{}50}{\scriptsize{} }} & \multirow{2}{*}{\textbf{\scriptsize{}100}} & \multirow{2}{*}{\textbf{\scriptsize{}0}} & \textbf{\scriptsize{}72}{\scriptsize{} } & \textbf{\scriptsize{}14}{\scriptsize{} } & \textbf{\scriptsize{}11}{\scriptsize{} } & \textbf{\scriptsize{}98}{\scriptsize{} } & \textbf{\scriptsize{}0}{\scriptsize{} } & \textbf{\scriptsize{}95}{\scriptsize{} } & {\tiny{}R}\tabularnewline
\cline{5-10}
 &  &  &  & \textbf{\scriptsize{}57}{\scriptsize{} } & \textbf{\scriptsize{}12}{\scriptsize{} } & \textbf{\scriptsize{}56}{\scriptsize{} } & \textbf{\scriptsize{}99}{\scriptsize{} } & \textbf{\scriptsize{}0}{\scriptsize{} } & \textbf{\scriptsize{}99}{\scriptsize{} } & {\tiny{}G}\tabularnewline
\cline{2-10}
\multirow{2}{*}{\textbf{\scriptsize{}3}} & \multirow{2}{*}{\textbf{\scriptsize{}90}} & \multirow{2}{*}{\textbf{\scriptsize{}100}{\scriptsize{} }} & \multirow{2}{*}{\textbf{\scriptsize{}0}} & \textbf{\scriptsize{}69}{\scriptsize{} } & \textbf{\scriptsize{}14}{\scriptsize{} } & \textbf{\scriptsize{}12}{\scriptsize{} } & \textbf{\scriptsize{}98}{\scriptsize{} } & \textbf{\scriptsize{}0}{\scriptsize{} } & \textbf{\scriptsize{}95}{\scriptsize{} } & {\tiny{}R}\tabularnewline
\cline{5-10}
 &  &  &  & \textbf{\scriptsize{}50}{\scriptsize{} } & \textbf{\scriptsize{}12}{\scriptsize{} } & \textbf{\scriptsize{}58}{\scriptsize{} } & \textbf{\scriptsize{}99}{\scriptsize{} } & \textbf{\scriptsize{}0}{\scriptsize{} } & \textbf{\scriptsize{}99}{\scriptsize{} } & {\tiny{}G}\tabularnewline
\cline{2-10}
\end{tabular}
\par\end{centering}

\caption{\label{tab:Evaluation}Evaluation metrics \emph{averaged over all
topologies} using \uline{R}eMOTE$^{*}$ and \uline{G}ATE$^{*}$
under different parameter settings.\vspace{-12bp}
}
\end{table}

\textbf{\uline{Simulation Results}}\textbf{.} Table \ref{tab:Evaluation}
shows six evaluation metrics averaged over all topologies and simulation
runs for selected parameters. We first observe that the TE choice
yields the expected trade-off between detection speed/rate and the
number of routing rule modifications needed (Section~\ref{sec:remote}).
The routing of GATE results in slightly fewer iterations and higher
detection rates. Nonetheless, ReMOTE achieves less routing changes
per node. Besides, the average bot detection rate is not significantly
affected by botnet size variations (scenarios 1, 2 and 3). Scenario
1.a sets $P_{rehome}=100\%$, accentuating the appearance of random
flood events attributed to benign hosts. However, random flood events
have no persistent target area in general. In contrast, malicious
flooding has a persistent target and can be accurately detected. Scenarios
1.b and 1.c assume an attacker that seeks to avoid detection by decreasing
the $B_{part}$ ratio. As expected, the bot detection rate drops considerably~\cite{crossfire}.
The detection rates of the attack target are promising for both schemes,
but GATE has the advantage, as explained in Section~\ref{sec:remote}.
GATE achieves lower link utilization after TE than ReMOTE, given that
it constitutes its sole optimization objective.

On the other hand, ReMOTE requires far fewer routing changes than
GATE. Fig.~\ref{fig:topos} shows boxplots of the number of modifications
per node for ReMOTE and GATE. Furthermore, the variance of this number
of changes is very low for ReMOTE, given that it contains the attack
within rapidly shrinking network areas.

\begin{figure}[t]
\centering{} \includegraphics[width=1\columnwidth,height=8cm]{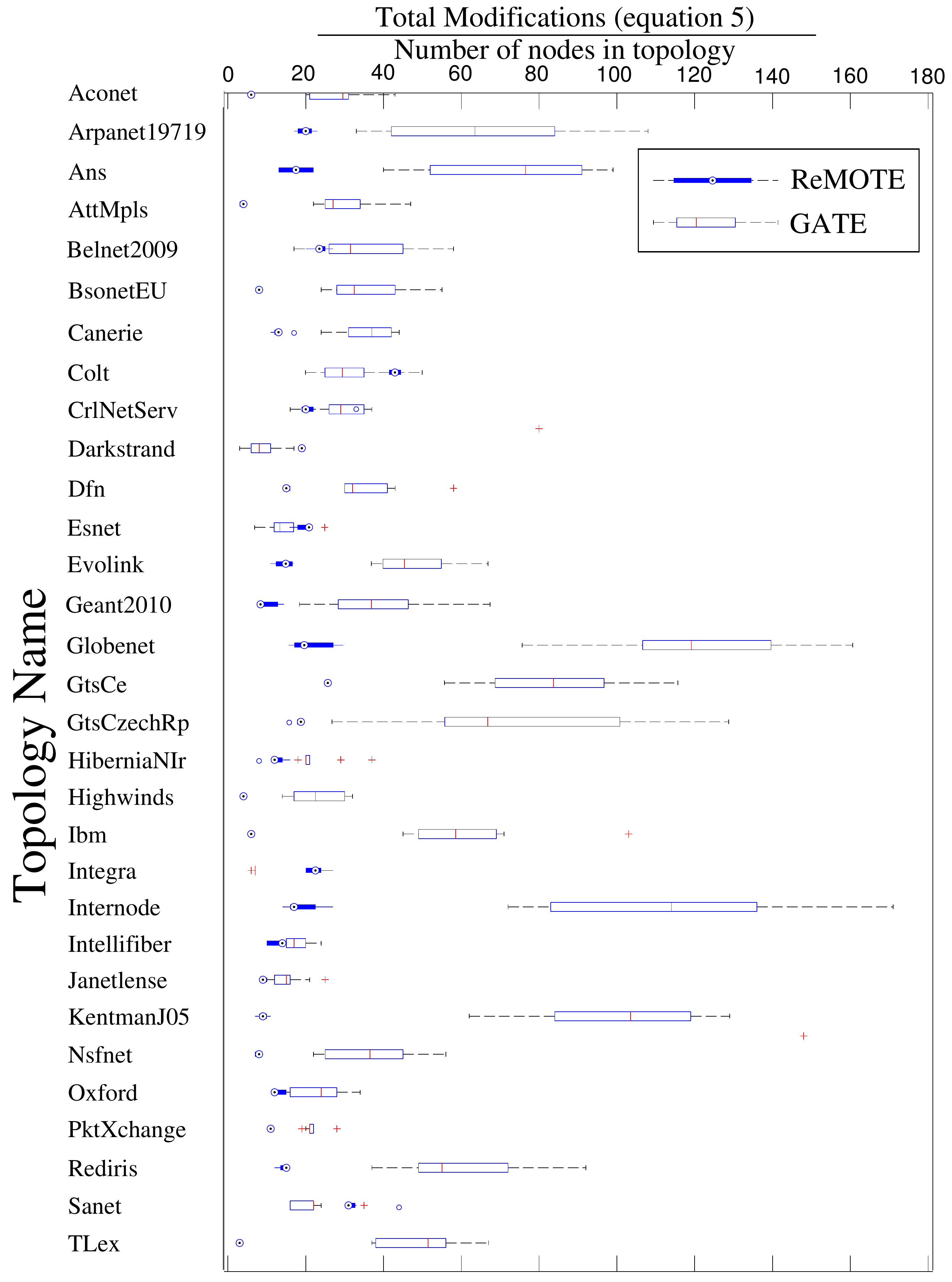}
\caption{\label{fig:topos}Boxplots of the number of rule modifications per
node with ReMOTE and GATE for several real topologies. \vspace{-8bp}
}
\end{figure}

\section{Related Work\label{SECRW}}

DDoS attacks have attracted notable research interest, particularly
in recent years. Braga \etal~\cite{braga2010lightweight} presented
a low-overhead technique for traffic analysis using Self Organizing
Maps (SOMs) to classify flows and enable DDoS attack detection caused
by \emph{direct }heavy hitters. Ashraf \etal~\cite{ashraf2014handling}
study several machine learning approaches for use in counter-DDoS
heuristics. Hommes \etal~investigate DoS attacks from the aspect
of \emph{routing table} space depletion~\cite{hommes2014implications}.
Lim \etal~propose a scheme to block botnet-based DDoS attacks that
do not exhibit detectable statistical anomalies~\cite{lim2014sdn}.
They focus on HTTP communications between clients and servers and
they employ \emph{CAPTCHA challenges} for HTTP URL redirection. Besides,
Xue \etal~\cite{xue2014towards} propose \emph{LinkScope} to detect
new classes of link-flooding attacks and locate critical links whenever
possible. \emph{LinkScope} employs end-to-end and hop-by-hop network
measurement techniques to capture abrupt performance degradation,
while \emph{packet inspection} is required. The work of Lee \etal~(CoDef)~\cite{codef}
is an \emph{inter-AS} approach towards defeating attacks such as Coremelt~\cite{Coremelt}
and Crossfire~\cite{crossfire}. CoDef proposes a cooperative method
for identifying low-rate attack traffic. Nonetheless, the emphasis
is on the communication among distrustful autonomous systems, and
not on the role of the TE process. We refer the reader to the survey
of Zargar \etal~\cite{zargar2013survey} for a comprehensive overview
of defenses against link flooding attacks.

The present paper differentiates by studying the role of TE in Crossfire
detection and mitigation. We note, however, that the presented workflow
can coexist with related solutions as well (cf. Section~\ref{sub:Collaboration}).
The authors' prior work studied the use of TE against Crossfire in
networks that employ \emph{source-based} routing and \emph{static
paths} \cite{INFOCOM}. The present work refers to \emph{destination-based}
routing and \emph{variable paths}. In this context, it contributes
a practical workflow for Crossfire mitigation, which can be used in
conjunction with existing TE, as well as with novel attack-aware TE
schemes.

\section{Conclusion\label{SECCONCLUSION}}

The present study showed that a class of stealthy link-flooding attacks
can be exposed by exploiting existing and novel TE schemes. A generic
defense workflow was formulated, which can be used efficiently in
conjunction even with common TE algorithms. A trade-off was indicated,
showing that there exist TE schemes that contain the attack within
isolated network areas, at the cost of slower exposure. Optimizing
this trade-off is the objective of future work.

{\scriptsize{}\bibliographystyle{acm}

}\vspace{-15bp}

\section*{Appendix}

\textbf{Proof of Lemma \ref{Lemma}.} Consider a physical topology
and all possible routing tree choices $\mathcal{R}_{TE}$. We will
assume that for every possible subset $\mathcal{S}$ of the topology
nodes, there exists a link $l$ and a routing tree $r_{TE}^{t}\in\mathcal{R}_{TE}$
such as $\mbox{\textbf{Dst}}\left\{ l\right\} =\mathcal{\mathcal{S}}$.
In other words, the attack can potentially focus on any node subset.
In addition, let the attacker be able to flood links that serve $k$
or less nodes. The variable $k$ qualitatively expresses the severity
of the attack. Attacks against network leaves affect few nodes (low
$k$), while attacks near, e.g., a gateway affect great parts of the
network (higher $k$).

The number of all possible subsets of $\mathcal{N}$, with size \uline{up
to} $k$, that contain one given target is $g_{k}=\sum_{m=1}^{k}\binom{\left\Vert \mathcal{N}\right\Vert -1}{m-1}$.
Any non-targeted node is contained in such an $m$-sized set with
probability $\frac{m-1}{\left\Vert \mathcal{N}\right\Vert -1}$. Thus,
the probability of a non-targeted node being coupled to the target
in groups of up to $k$ nodes is $p_{k}=\frac{f_{k}}{g_{k}}$, where
$f{}_{k}=\sum_{m=1}^{k}\frac{m-1}{\left\Vert \mathcal{N}\right\Vert -1}\binom{\left\Vert \mathcal{N}\right\Vert -1}{m-1}=\sum_{m=1}^{k}\binom{\left\Vert \mathcal{N}\right\Vert -2}{m-2}$.

Finally, $p_{1}=0$ and $p_{\left\Vert \mathcal{N}\right\Vert }\approx\frac{1}{2}$
due to the binomial theorem, while $p_{k}$ is strictly rising:\\
Let $\Delta p=p_{k+1}-p_{k}$. Then, via finite differences we write:\vspace{-4bp}
\begin{equation}
\Delta p=\frac{\Delta f\cdot g_{k}-f_{k}\cdot\Delta g}{g{}_{k}\cdot\left(g_{k}+\Delta g\right)}\mbox{,}\,\Delta g={\scriptstyle \binom{\left\Vert \mathcal{N}\right\Vert -1}{k}},\Delta f={\scriptstyle \binom{\left\Vert \mathcal{N}\right\Vert -2}{k-1}}
\end{equation}
Notice that $\frac{\Delta g}{\Delta f}=\frac{\left\Vert \mathcal{N}\right\Vert -1}{k}$.
Thus, $\Delta p\propto\Delta f\left(g_{k}-f_{k}\frac{\left\Vert \mathcal{N}\right\Vert -1}{k}\right)$.\\
In addition, $f{}_{k}=\sum_{m=1}^{k}\frac{m-1}{\left\Vert \mathcal{N}\right\Vert -1}\binom{\left\Vert \mathcal{N}\right\Vert -1}{m-1}$.
Therefore:\\
$\Delta p\propto\sum_{m=1}^{k}\left(\frac{k-m+1}{k}\right)\binom{\left\Vert \mathcal{N}\right\Vert -1}{m-1}>0$,
hence $p_{k+1}>p_{k}$, QED.$\blacksquare$
\end{document}